\theoremstyle{plain}
\newtheorem{lemma}{Lemma}
\newtheorem{theorem}{Theorem}
\newtheorem{definition}{Definition}
\newtheorem{proposition}{Proposition}
\newtheorem{corollary}{Corollary}
\DeclareMathOperator{\wt}{wt}
\DeclareMathOperator{\F}{\mathbb F}
\DeclareMathOperator{\W}{\mathbf W}
\renewcommand{\ell}{l}
\newcommand{\set}[1]{\left\{{#1}\right\}}
\newcommand{\mB}{\mathcal{B}}
\newcommand{\mC}{\mathcal{C}}
\newcommand{\mD}{\mathcal{D}}
\newcommand{\mF}{\mathcal{F}}
\newcommand{\mK}{\mathcal{K}}
\newcommand{\mP}{\mathcal{P}}
\newcommand{\mR}{\mathcal{R}}
\newcommand{\mX}{\mathcal{X}}
\begin{document}

\title{Shortened Polarization Kernels}
\author{
\IEEEauthorblockN{Grigorii Trofimiuk}
\IEEEauthorblockA{ITMO University, Russia\\
Email: gtrofimiuk@itmo.ru}}

\maketitle

\begin{abstract}
A shortening method for large polarization kernels is presented, which results in shortened kernels with the highest error exponent if applied to kernels of size up to 32. It uses lower and upper bounds on partial distances  for quick elimination of unsuitable shortening patterns. 

The proposed algorithm is applied to some kernels of sizes 16 and 32 to obtain shortened kernels of sizes from 9 to 31. These kernels are used in mixed-kernel polar codes of various lengths. Numerical results demonstrate the advantage of polar codes with shortened large kernels compared with shortened and punctured Arikan polar codes, and polar codes with small kernels.
\end{abstract}

\section{Introduction}

Polar codes are a novel class of error-correcting codes, which achieves the symmetric capacity of a binary-input memoryless channel  $W$. They have low complexity construction, encoding, and decoding algorithms \cite{arikan2009channel}. However, Arikan polar codes allow length $2^t$ only, which  restricts their practical application. To cope with it, many length-compatible constructions were proposed, including shortened and punctured polar codes, chained polar codes 
\cite{trifonov2017chained} and asymmetric construction 
\cite{cavatassi2019asymmetric}.

Polarization is a general phenomenon and is not restricted to Arikan matrix \cite{korada2010polar}.  One can replace it by a larger matrix, called 
\textit{polarization kernel}, which has better polarization properties 
obtaining thus better finite length performance. Polar codes with large 
kernels were shown to provide an asymptotically optimal scaling exponent
\cite{fazeli2018binary}. 
By combining different  kernels, one can construct mixed-kernel (MK) polar code \cite{presman2016mixed}, which may have wide range of possible  lengths. For instance, hybrid design based on kernels of dimension 3 and 5 was suggested in \cite{bioglio2019improved}. Unfortunately, such small kernels have quite poor polarization properties. As a result, these codes do not outperform length-compatible constructions based on Arikan matrix.

At the same time, various  kernels with good properties were recently introduced together with simplified processing algorithms, including $16 \times 16$ and $32 \times 32$ kernels for the window processing algorithm \cite{trofimiuk2021window}, BCH  kernels with reduced trellis processing complexity \cite{moskovskaya2020design}, approximate processing based on Box and Match algorithm \cite{miloslavskaya2014sequentialBCH}, convolutional polar kernels  \cite{morozov2020convolutional} and some kernels \cite{trofimiuk2021search} which allow low complexity processing by recursive trellis processing algorithm \cite{trifonov2021recursive}. However, obtaining polar codes of arbitrary length  requires application of code shortening and puncturing, which are not yet well understood for the case of non-Arikan kernels.

To address this issue we consider shortening of large polarization kernels \cite{korada2010polar}. This allows to obtain kernels of arbitrary smaller size, which can be processed by the same algorithm as the original kernel.  In this paper we propose an algorithm for kernel shortening, resulting in shortened kernels with the highest error exponent  if applied to kernels of size up to 32. The proposed algorithm examines all shortening patterns. Nevertheless, it avoids explicit evaluation of partial distance   for all considered shortening pattern by using lower and upper bounds, substantially reducing thus optimization complexity.

We applied the proposed method to kernels of size $16$ and $32$, obtained shortened kernels of sizes from 9 to 31, and used them in MK polar codes of lengths $32 \cdot l$, $17\leq l \leq 31$. Simulation results show that the obtained codes provide significant performance gain compared with shortened and punctured Arikan polar codes, and polar codes with small kernels. 

\section{Background}
\label{sBackground}
\subsection{Notations}
For a positive integer $n$, we denote by $[n]$ the set $\{0,1,\dots ,n-1\}$. For vectors $a$ and $b$ we denote their concatenation by $a.b$. The vector $u_a^b$ is a subvector $(u_a,u_{a+1},\dots,u_b)$ of a vector $u$. $M[i]$ is the $i$-th row of the matrix $M$. By $M[i,j]$ we denote $j$-th element of $M[i]$. For matrices $A$ and $B$ we denote their Kronecker product by $A \otimes B$. By $A^{\otimes m}$ we denote $m$-fold Kronecker product of matrix $A$\  with itself.

\subsection{Polarizing transformation}
\label{ss:pt}
Consider a binary-input memoryless channel  with transition probabilities $W\{y|c\}, c\in \F_2, y\in \mathcal Y$, where $ \mathcal Y$ is output alphabet. 
 An $(n = l^m, k)$ polar code is a linear block code generated by $k$ rows of matrix $G_m = M^{(m)}K^{\otimes m}$, where $M^{(m)}$ is a digit-reversal permutation matrix, corresponding to mapping  $ \sum_{i = 0}^{m-1}t_il^i \rightarrow  \sum_{i = 0}^{m-1}t_{m-1-i}l^i$,$t_i \in [l]$.
The encoding  scheme is given by
$ c_0^{n-1}= u_0^{n-1}G_m$,
where $u_i,i\in \mathcal F$ are  set to some pre-defined values, e.g. zero (frozen symbols),  $|\mF| = n - k$, and the remaining values $u_i$ are set to the payload data. 

It is possible to show that a binary input memoryless channel $W$ together with matrix $G_m$ gives rise to bit subchannels $W_{m,K}^{(i)}
(y_0^{n-1},u_0^{i-1}|u_i)$ with capacities approaching $0$ or $1$, and fraction of noiseless subchannels approaching $I(W)$ \cite{korada2010polar}. Selecting  $\mF$ as the set of indices of low-capacity subchannels enables almost error-free communication. 

It is convenient to define probabilities 
 $\W^{(j)}_{m}(u_0^{j}|y_0^{n-1}) =
 \sum_{u_{i+1}^{n-1}}\prod_{i = 0}^{n-1}W((u_0^{n-1}G_m)_i|y_i)$.
where  matrix $G_m$ will be clear from the context. 

Due to the recursive structure of $G_m$,
one has
\begin{align}
\label{mKernProb}
\W^{(sl+t)}_{m}(u_0^{sl+t}|y_0^{n-1}) = 
\quad \quad \quad \quad \quad  \quad \quad \quad \quad \quad \quad \nonumber\\ 
\sum_{u_{sl+t+1}^{l(s+1)-1}} \prod_{j = 0}^{l-1} \W_{m-1}^{(s)}
(\theta_K[u_0^{l(s+1)-1},j]|y_{j\frac{n}{l}}^{(j+1)\frac{n}{l}-1}),
\end{align}
where $\theta_K[u_0^{(s+1)l-1},j]_r = (u_{lr}^{l(r+1)-1}K)_j, r \in [s+1]$.

 At the receiver side, the successive cancellation (SC) decoding algorithm makes decisions 
 \begin{equation}
 \label{mSCProb}
 \widehat u_i=\begin{cases}\arg\max_{u_i\in \F_2} \W_m^{(i)}(\widehat u_0^{i-1}.u_i|y_0^{n-1}), &i\notin\mF,\\
\text{the frozen value of $u_i$}&i\in \mF.
\end{cases}
\end{equation}

Since the probabilities \eqref{mKernelWExact} are computed recursively, it is convenient to consider the probabilities for one layer of the polarization transform, which are given by
\begin{equation}
\W_1^{(\phi)}(u_0^\phi|y_0^{l-1})=\sum_{u_{\phi+1}^{l-1} \in \F_2^{l-\phi-1}} \W^{(l-1)}_{1}(u_0^{l-1}|y_0^{l-1}).
\label{mKernelWExact}
\end{equation}

The task of computing \eqref{mKernelWExact} is referred to as {\em kernel processing}. The value of $\phi$ is referred to as processing \textit{phase}. Computing these probabilities reduces to soft-output decoding of nonsystematically encoded codes generated by the last $l-\phi-1$ rows of $K$.  
This problem was considered in 
\cite{griesser2002aposteriori}.

An $(n, k)$ mixed-kernel polar code \cite{presman2016mixed}, \cite{bioglio2020multikernel} is a linear block code generated by $k$ rows of matrix\footnote{We omit digit-reversal permutation matrix for simplicity} $G_m = \mK_1 \otimes \mK_2\otimes ... \otimes \mK_m$, where $\mK_i$ is an $l_i \times l_i$ polarization kernel and $n = \prod_{i=1}^m l_i$.

\subsection{Fundamental parameters of polar codes}
\subsubsection{Error exponent}
Let $W: \{0,1\} \to \mathcal{Y}$ be a symmetric binary-input discrete memoryless channel (B-DMC) with capacity $I(W)$. By definition,
$$
I(W) = \sum_{y \in \mathcal Y} \sum_{x \in \set{0,1}} \frac{1}{2}W(y|x) \log \frac{W(y|x)}{\frac{1}{2}W(y|0)+\frac{1}{2}W(y|1)}.
$$

The Bhattacharyya parameter of $W$ is 
$$Z(W) = \sum_{y\in\mathcal{Y}} \sqrt{W(y|0)W(y|1)}.$$

Consider the polarizing transform $K^{\otimes m}$, where $K$ is an $l \times l$ polarization kernel, and bit subchannels $W_{m}^{(i)}(y_0^{n-1},u_0^{i-1}|u_i)$, induced by it. Let $Z_m^{(i)} = Z(W_{m}^{(i)}(y_0^{n-1},u_0^{i-1}|u_i))$ be a Bhattacharyya parameter of $i$-th subchannel, where $i$ is uniformly distributed on the set $[l^m]$. Then, for any B-DMC $W$ with $0<I(W)<1$,
we  say that an $\ell\times\ell$ matrix $K$ has error exponent (also known as a rate of polarization) $E(K)$ if \cite{korada2010polar}
\begin{itemize}
\item[(i)] For any fixed $\beta < E(K)$,
\[
\liminf_{n \to \infty} \Pr[Z_n \leq 2^{-\ell^{n\beta}}] = I(W).
\]
\item[(ii)] For any fixed $\beta > E(K)$,
\[
\liminf_{n\to\infty} \Pr[Z_n \geq 2^{-\ell^{n\beta}}] = 1.
\]
\end{itemize}

That is, the error exponent shows how fast bit subchannels of $K^{\otimes m}$ approach either almost noiseless or noisy channel with $n = l^m$. Suppose we construct $(n,k)$ polar code $\mathcal C$ with kernel $K$. Let $P_e(n)$ be a block error probability of $\mathcal C$ under transmission over $W$ and decoding by SC algorithm. It was shown \cite{korada2010polar}, that if $k/n <\ I(W)$ and $\beta < E(K)$, then for sufficiently large $n$ the probability  
$
P_e(n) \leq 2^{-n^\beta}.
$

Let $\langle g_1,g_2,\dots,g_k \rangle$ be a linear code, generated by vectors $g_1,g_2,\dots, g_k$. Let $d_H(a,b)$ be the Hamming distance between $a$ and $b$. Let $d_H(b,\mathcal C) = \min_{c\in \mathcal C} d_H(b,c)$ be a minimum distance between vector $b$ and linear block code $\mathcal C$. Let $\wt(b) = d_H(b,\mathbf 0)$. The \textit{partial distances} (PD) $\mD_i, i \in [l]$, of an $l\times l$ kernel $K$ are defined as follows:
\begin{align}
\label{fPDDef}
\mD_i &= d_H(K[i],\langle K[i+1],\dots,K[l-1] \rangle), i \in [l-1],\\
\mD_{l-1} &= d_H(K[l-1],\mathbf 0).
\end{align}

The vector $\mD$ is referred to as a \textit{partial distances profile} (PDP). It was shown in \cite{korada2010polar} that for any B-DMC $W$ and any $l\times l$ polarization kernels $K$ with partial distances $\{\mD_i\}^{l-1}_{i=0}$, the error exponent $ 
E(K) = \frac{1}{l}\sum^{l-1}_{i=0}\log_l \mD_i.
$

It is possible to show that by increasing kernel dimension $l$ to infinity, one can obtain $l\times l$ kernels with error exponent arbitrarily close to 1  \cite{korada2010polar}. Explicit constructions of kernels with high error exponent are provided in \cite{trofimiuk2021search, presman2015binary,lin2015linear}.
\subsubsection{Scaling exponent}
Another crucial property of polarization kernels is the \textit{scaling exponent}.
Let us fix a binary discrete memoryless channel $W$ of capacity $I(W)$ and a desired block error probability $P_e$. Given $W$ and $P_e$,
suppose we wish to communicate at rate $I(W) - \Delta$ using a family of $(n,k)$ polar codes with kernel $K$. The value of $n$
scales as $O(\Delta^{- \mu(K)})$, where the constant $\mu(K)$ is known as the scaling exponent \cite{hassani2014finitelength}. The scaling exponent depends on the channel. Unfortunately, the algorithm for computing it is only known for the case of binary erasure channel \cite{hassani2014finitelength, fazeli2014scaling}. Furthermore, this algorithm is heuristic. 
It is possible to show \cite{pfister2016nearoptimal,fazeli2021binary} that there exists $l\times l $ kernel $K_l$, such that $\lim_{l\rightarrow \infty}\mu(K_l)=2,$
i.e. the corresponding polar codes provides an optimal scaling behaviour. 
Constructions of the kernels with good scaling exponent are provided in 
\cite{trofimiuk2021search,yao2019explicit}.

\section{Shortening of polarization kernels}
Given a polarization kernel, shortening allows one to obtain from it kernels of smaller size. Moreover, the shortened kernels can be processed by the same processing algorithm as the original kernel, which makes them applicable for construction and decoding of polar codes with a wide range of lengths.

In this section we describe a shortening algorithm, introduce notations, and describe how to process the shortened kernel.

\label{sShortening}
\subsection{Shortening procedure}
\begin{proposition}[\cite{fazeli2014scaling}]
\label{pAddRow}
Let $K$ and $\widehat K$ be $l \times l$ polarization kernels, where $\widehat K$ is obtained from $K$ by adding row $a$ to row $b$ with $a > b$, then $\mC_{\widehat K}^{(i)} = \mC_{K}^{(i)}$ for all $i \in [l]$. Thus, $E(\widehat K) = E(K)$ and $\mu(\widehat K) = \mu(K)$.
\end{proposition}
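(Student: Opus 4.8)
The plan is to realize the row operation as an invertible change of input variables and to show that it leaves every bit subchannel invariant up to a capacity- and Bhattacharyya-preserving relabeling; the invariance of $E$ and $\mu$ then follows because both are functionals of the family $\{\mC_K^{(i)}\}$. Write $\widehat K = TK$, where $T$ is the elementary matrix with $T[j,j]=1$ for all $j$, $T[b,a]=1$, and zeros elsewhere, so that left multiplication by $T$ adds row $a$ to row $b$. For any input $u_0^{l-1}$ one has $u_0^{l-1}\widehat K = (u_0^{l-1}T)K = v_0^{l-1}K$, where $v=uT$ is the involution $v_a=u_a+u_b$, $v_j=u_j$ for $j\neq a$. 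Thus the codeword presented to the channel is the same for both kernels after the bijective substitution $u\mapsto v$, and the whole argument reduces to tracking how this substitution acts on the conditioning and the transmitted coordinate at each phase.

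First I would fix an arbitrary underlying channel $W$ and write the subchannel probabilities for $\widehat K$ as
\[
\sum_{u_{i+1}^{l-1}} \prod_{t=0}^{l-1} W\bigl((u_0^{l-1}\widehat K)_t \mid y_t\bigr) = \sum_{u_{i+1}^{l-1}} \prod_{t=0}^{l-1} W\bigl((v_0^{l-1}K)_t \mid y_t\bigr),
\]
the analogous sum without the substitution defining $\mC_K^{(i)}$. For fixed $u_0^i$ the induced map $u_{i+1}^{l-1}\mapsto v_{i+1}^{l-1}$ is an affine bijection (either the identity or a single coordinate XOR), so reindexing the summation by $v_{i+1}^{l-1}$ turns the right-hand side into $\mC_K^{(i)}$ evaluated at $v_0^i$. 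It then remains to compare $v_0^i$ with $u_0^i$, and this is where the hypothesis $a>b$ is decisive. If $i<a$ then $v_0^i=u_0^i$ and the subchannels are literally identical. If $i>a$ then the transmitted bit is unchanged ($v_i=u_i$) while the conditioning vector $v_0^{i-1}$ is a bijective relabeling of $u_0^{i-1}$. If $i=a$ then $v_0^{i-1}=u_0^{i-1}$ and the transmitted bit becomes $u_a+u_b$, where $u_b$ lies among the already-conditioned coordinates precisely because $b<a$; XORing the channel input with a known function of the side information is again a channel equivalence. In every case $\mC_{\widehat K}^{(i)}$ and $\mC_K^{(i)}$ agree up to relabeling, hence share capacity and Bhattacharyya parameter.

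Since this equivalence holds for every input channel $W$, it propagates through the recursion \eqref{mKernProb}: by induction on the number of layers $m$, the subchannels of $\widehat K^{\otimes m}$ are equivalent to those of $K^{\otimes m}$, so the two polarization processes coincide and therefore $E(\widehat K)=E(K)$ and $\mu(\widehat K)=\mu(K)$. As an independent check for the error exponent I would verify directly that the partial distances are unchanged: for $i>b$ neither $\widehat K[i]$ nor $\hull{\widehat K[i+1],\dots,\widehat K[l-1]}$ involves row $b$; for $i=b$ the added row $K[a]$ lies in $\hull{K[b+1],\dots,K[l-1]}$ because $a>b$, so the relevant coset is unchanged; and for $i<b$ the identity $K[b]=(K[b]+K[a])+K[a]$ shows $\hull{\widehat K[i+1],\dots,\widehat K[l-1]}=\hull{K[i+1],\dots,K[l-1]}$. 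Hence $\mD_i(\widehat K)=\mD_i(K)$ for all $i$, and $E(\widehat K)=E(K)$ follows from the partial-distance formula.

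The main obstacle I anticipate is bookkeeping rather than conceptual. The only delicate phase is $i=a$, the unique place where the operation touches the transmitted symbol, and there one must argue cleanly that absorbing the known bit $u_b$ into the input is an equivalence of channels; the condition $a>b$ is exactly what guarantees $u_b$ is available in the conditioning rather than among the not-yet-decoded coordinates. The remaining care is to fix precisely the notion of channel equivalence (relabeling of inputs and side information) and to confirm that it is preserved by the single-step combining in \eqref{mKernProb}, so that the lift from one kernel application to $K^{\otimes m}$, and hence to the asymptotic quantities $E$ and $\mu$, is justified.
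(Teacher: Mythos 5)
Your argument is essentially correct, but note first that the paper offers no proof of Proposition~\ref{pAddRow} to compare against: it is imported verbatim from \cite{fazeli2014scaling}. Measured against the literal statement, your write-up has its weight in the wrong place. The proposition's actual content is the equality of \emph{kernel codes} $\mC^{(i)}_{\widehat K}=\mC^{(i)}_{K}$, where $\mC^{(i)}_{K}=\hull{K[i],\dots,K[l-1]}$ is a linear code, not a subchannel; your middle paragraph treats ``$\mC_K^{(i)}$ evaluated at $v_0^i$'' as a transition probability, conflating the code with what the paper would write as $\W_1^{(i)}$. The statement's proof is exactly your closing ``independent check,'' which deserves to be the main body: for $i>b$ the generating rows $K[i],\dots,K[l-1]$ are untouched, and for $i\le b$ both $K[a]$ and $\widehat K[b]=K[b]+K[a]$ lie in the generating set because $a>b\ge i$, so $K[b]=\widehat K[b]+K[a]$ gives $\hull{\widehat K[i],\dots,\widehat K[l-1]}=\hull{K[i],\dots,K[l-1]}$ --- two lines, with $E(\widehat K)=E(K)$ following since each partial distance $\mD_i$ is determined by the pair $\mC^{(i)}\supset\mC^{(i+1)}$. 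Your main line --- the transvection $v=uT$ with the three-case phase analysis ($i<a$: sums literally equal after reindexing; $i=a$: input relabeled by the already-conditioned bit $u_b$, which is precisely where $a>b$ enters; $i>a$: bijective relabeling of side information) --- is a sound and more self-contained route, and it has the virtue of establishing subchannel equivalence for every $W$ directly, which is what the invariance of $\mu$ genuinely requires (code equality alone yields $\mu(\widehat K)=\mu(K)$ only via the citation-level fact that equal kernel codes induce equivalent subchannels). The one step you correctly flag but leave open --- that equivalence-up-to-relabeling propagates through the recursion \eqref{mKernProb} to $K^{\otimes m}$ --- is routine but must appear in a complete proof, since both $E$ and $\mu$ are defined through the multi-level process rather than a single kernel application.
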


For convenience, the kernel row additions, described in Proposition \ref{pAddRow} is referred to as \textit{equivalent additions}.

\begin{algorithm}
\caption{\texttt{ShortenSingleCoordinate}$(K, j, l)$}
\label{shortening_step}
$\widehat K \gets K$, $a \gets l-1$; \tcp{initialization}
\For{$i$ \textbf{from} $l-1$ \textbf{downto} $0$}{
\If{$K[i,j] = 1$}
{
\tcp{index of the last 1 in column $j$}
$a \gets i$;\\
\textbf{break};\\
}
}
\tcp{$\widehat K[a,j] = 1$; $\widehat K[i,j] = 0, i \in [l] \setminus \set{a}$}
\For{$i$ \textbf{from} $0$ \textbf{to} $a-1$}{\label{xorLoopStart}
\If{$K[i,j] = 1$}{
$\widehat K[i] \gets \widehat K[i] \oplus \widehat K[a]$ 
}
}\label{xorLoopEnd}
Obtain $\tilde K$ by removing $a$th row and $j$th column from $\widehat K$;\\
\Return $\tilde K$;\\
\end{algorithm}

The shortening of $l \times l$ kernel $K$ on a single coordinate $j$ can be performed by calling \textit{ShortenSingleCoordinate}$(K, j, l)$. This procedure was initially proposed in \cite{korada2010polar}. Observe that due to Proposition \ref{pAddRow}, after steps \ref{xorLoopStart}--\ref{xorLoopEnd} of Alg. \ref{shortening_step} the kernel $\widehat K$ still has the same polarization properties as the kernel $K$. 

For given $l\times l$ kernel $K$ we define a sequence of 
$(l, l-\phi,d_K^{(\phi)})$ \textit{kernel codes} $\mC_{K}^{(\phi)} = \langle K[\phi],\dots,K[l-1] \rangle$, $\phi \in [l],$ and $\mC_K^{(l)}$ contains only zero codeword.

\begin{lemma}
\label{shortened_kernel_codes}
Let $K$ be an $l \times l$ polarization kernel. Let $a$ be the index of the last nonzero row in $j$th column. Let $K' = \textit{ShortenSingleCoordinate}(K, j, l)$. We have
\begin{align}
\label{shortSingle1} \mC^{(i)}_{K'} &= s_{\set{j}}(\mC^{(i)}_{K}), &i \in [a],\\
\label{shortSingle2} \mC^{(i)}_{K'} &= s_{\set{j}}(\mC^{(i+1)}_{K}), &a < i <\ l-1,
\end{align} 
where $s_{\mP}(\mC)$ denotes a code obtained by shortening of linear code $\mC$ on coordinates in $\mP$.
\end{lemma}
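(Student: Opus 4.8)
The plan is to split Algorithm~\ref{shortening_step} into its two phases and treat them separately. The first phase (lines~\ref{xorLoopStart}--\ref{xorLoopEnd}) performs only equivalent additions of the unchanged row $a$ into lower-indexed rows $i<a$, so by repeated application of Proposition~\ref{pAddRow} the intermediate matrix $\widehat K$ satisfies $\mC_{\widehat K}^{(i)}=\mC_K^{(i)}$ for every $i\in[l]$. It therefore suffices to prove the two claimed identities with $\mC_{\widehat K}$ in place of $\mC_K$ on the right-hand side, after which the substitution $\mC_{\widehat K}^{(i)}=\mC_K^{(i)}$ finishes the argument. The remaining work is to track how the second phase, deletion of row $a$ and column $j$, acts on the nested family of codes generated by the tail rows of $\widehat K$.

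First I would record the structural fact that drives everything: after the first phase, column $j$ of $\widehat K$ is the unit vector carrying its single $1$ in row $a$, i.e. $\widehat K[a,j]=1$ and $\widehat K[r,j]=0$ for all $r\neq a$. Consequently the rows of $K'$ are precisely the rows $\widehat K[r]$ with $r\neq a$, each with coordinate $j$ deleted; explicitly $K'[i]$ is $\widehat K[i]$ for $i<a$ and $\widehat K[i+1]$ for $i\geq a$, punctured at position $j$ in both cases. Every such row already carries a $0$ in column $j$, a fact I will use to identify the generators of $\mC_{K'}^{(i)}$ with coordinate-$j$ deletions of tail rows of $\widehat K$.

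For the case $i<a$, corresponding to identity~\eqref{shortSingle1}, I would expand a generic codeword of $\mC_{\widehat K}^{(i)}=\langle \widehat K[i],\dots,\widehat K[l-1]\rangle$ as $\sum_{r=i}^{l-1} c_r\,\widehat K[r]$ and observe that, by the single-$1$ structure of column $j$, its $j$th coordinate equals $c_a$. Hence the subcode of codewords vanishing at $j$ is exactly $\langle \widehat K[r] : i\leq r\leq l-1,\ r\neq a\rangle$, and deleting coordinate $j$ turns this into $s_{\set{j}}(\mC_{\widehat K}^{(i)})$. But this span is precisely that of the rows $K'[i],\dots,K'[l-2]$, i.e. $\mC_{K'}^{(i)}$, which yields~\eqref{shortSingle1}.

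For the case $a<i<l-1$, corresponding to identity~\eqref{shortSingle2}, the generators of $\mC_{K'}^{(i)}$ are the deletions of $\widehat K[i+1],\dots,\widehat K[l-1]$, all of which sit strictly below row $a$ and therefore already have a $0$ in column $j$. Thus the whole code $\mC_{\widehat K}^{(i+1)}$ meets the shortening constraint, so $s_{\set{j}}$ degenerates to plain deletion of coordinate $j$; matching this against the index shift $K'[i]=\widehat K[i+1]$ gives~\eqref{shortSingle2}. The boundary value $i=a$ is consistent with both formulas, since shortening collapses the row-$a$ contribution and hence $s_{\set{j}}(\mC_K^{(a)})=s_{\set{j}}(\mC_K^{(a+1)})$. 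I expect the main obstacle to be the case $i<a$: there shortening genuinely discards the codewords that use row $a$, so the argument must certify that the surviving subcode is spanned exactly by the remaining tail rows and that deleting column $j$ neither lowers its dimension nor identifies distinct codewords — both of which follow from the unit-column structure established in the first step, together with the injectivity of coordinate-$j$ deletion on vectors that already vanish at $j$.
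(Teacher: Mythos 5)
Your proof is correct and follows exactly the route the paper intends: the paper's own proof is the one-line remark that the lemma ``directly follows from the definition of shortened linear code and Alg.~1,'' and your argument simply makes that explicit --- Proposition~\ref{pAddRow} handles the equivalent-addition phase, and the unit-column structure of $\widehat K$ (which the algorithm's own comment records) justifies identifying row deletion with shortening in both index ranges. Your treatment of the boundary case $i=a$ and of injectivity of coordinate deletion on vectors vanishing at $j$ supplies details the paper leaves implicit, with no gaps.
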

\begin{proof}
It directly follows from the definition of shortened linear code and Alg. 
\ref{shortening_step}.
\end{proof}

\begin{definition}
A kernel $K'$ is called shortened kernel of $l \times l$ kernel $K$ on position $j$ if it satisfies \eqref{shortSingle1} and \eqref{shortSingle2}.
\end{definition} 

Alg. \ref{shortening_step} can be performed several times to obtain an $(l-t) \times (l-t)$ kernel $K^{(t)}$, $t \in [l-1]$, from $K$. Namely, we define a sequence of kernels 
$
K^{(i+1)} = \textit{ShortenSingleCoordinate}(K^{(i)}, j_{i-1}, l-i),
$
where $i \in [l-1]$ and $K^{(0)} = K$. Let $a_{i} \in [l-i]$ be the index of the removed row of $K^{(i)}$ after shortening on coordinate $j_i$.  We also define the following sequences of vectors: 
\begin{align}
\label{fAArray} A^{(i+1)} &= (A^{(i)}_0, A^{(i)}_1, \dots, A^{(i)}_{a_{i}-1},A^{(i)}_{a_{i}+1}, \dots, A^{(i)}_{l-i-1}),\\ 
J^{(i+1)} &= (J^{(i)}_0, J^{(i)}_1, \dots, J^{(i)}_{j_{i}-1},J^{(i)}_{j_{i}+1}, \dots, J^{(i)}_{l-i-1}),
\end{align}
where $A^{(0)} = J^{(0)} = (0,1,\dots,l-1)$. Therefore, we can map index $j_i \in [l-i]$ of removed from $K^{(i)}$ column to index
\begin{equation}
\label{eqColIndices}
p_i = J^{(i)}_{j_i}, p_i \in [l], i \in [l-t].
\end{equation}
of the column, removed from $K$. Similarly, we introduce the vector $r_0^{l-i-1}$ of removed rows indices from $K$ \begin{equation}
\label{fRIndex}
r_i = A_{a_i}^{(i)}, i \in [l-i].
\end{equation}
\begin{theorem}
\label{thPermutation}
Let $K^{(t)}_\pi$ be an $(l-t) \times (l-t)$ polarization kernel, obtained by $t$ times application of Alg. \ref{shortening_step} to $l \times l$ kernel $K$. Let $p_{\pi} = (p_{\pi(0)}, p_{\pi(1)}, \dots, p_{\pi(t-1)})$ be a sequence of removed columns indices, defined in 
\eqref{eqColIndices}, $\pi(j)$ is a permutation,. Then, the kernel codes $\mC_{K^{(t)}_\pi}^{(i)}$, $i \in [l-t]$, are the same for any $\pi(j)$. 
\begin{proof}
Lemma \ref{shortened_kernel_codes} implies that for any $\pi(j)$ the kernel codes  
$\mC_{K^{(t)}_\pi}^{(i)} = s_{\mP}(\mC_{K}^{(A_i)})$, $i \in [l-t]$,
where the array $A = A^{(t)}$ is given by \eqref{fAArray} and $\mP = \set{p_0, p_1, \dots, p_{t-1}}$. 
\end{proof}
\end{theorem}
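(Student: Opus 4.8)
The plan is to lift the whole argument to the level of the linear-algebraic shortening operator $s_{\mP}$ acting on the fixed flag of kernel codes $\mC_{K}^{(0)} \supset \mC_{K}^{(1)} \supset \dots \supset \mC_{K}^{(l)} = \set{\bnull}$ of the original kernel $K$, and to show that the output depends only on the \emph{set} $\mP = \set{p_0, \dots, p_{t-1}}$ of removed original-column indices, never on the order in which they are removed. Since $p_\pi$ is by hypothesis a permutation of this fixed set, $\mP$ is already manifestly $\pi$-invariant; the entire difficulty is to express the surviving kernel codes intrinsically through $\mP$ and the flag.

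First I would record the composition property of code shortening: shortening on $\set{p_0}$ and then on $\set{p_1}$ gives the same code as the reverse order, and both equal shortening jointly on $\set{p_0,p_1}$, because the defining conditions ``zero on $p_0$'' and ``zero on $p_1$'' are symmetric and coordinate deletions commute. The coordinate re-labelling caused by each deletion is exactly what the vectors $J^{(i)}$ and the map $p_i = J^{(i)}_{j_i}$ of \eqref{eqColIndices} are designed to absorb, so that $\mP$ is a well-defined subset of the original columns. Iterating Lemma \ref{shortened_kernel_codes} together with this composition property then shows, by induction on $t$, that every kernel code of $K^{(t)}_\pi$ has the form $s_{\mP}(\mC_{K}^{(\phi)})$ for some original index $\phi$, i.e. the identity $\mC_{K^{(t)}_\pi}^{(i)} = s_{\mP}(\mC_{K}^{(A_i)})$ used in the stated proof.

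The substance is then to pin down which indices $\phi$ survive, that is, that $A = A^{(t)}$ is independent of $\pi$. Because shortening preserves inclusions, the chain
\[
s_{\mP}(\mC_{K}^{(l)}) \subseteq s_{\mP}(\mC_{K}^{(l-1)}) \subseteq \dots \subseteq s_{\mP}(\mC_{K}^{(0)})
\]
is nested, and I claim the kernel codes of $K^{(t)}_\pi$, listed by decreasing dimension, are exactly its distinct members. The key local fact is that a single shortening removes precisely the redundant index: if $a$ is the last nonzero row of column $p$, then after the equivalent additions of Alg. \ref{shortening_step} column $p$ carries a unique $1$, located in row $a$, so $K[a]$ is the only generator of $\mC_{K}^{(a)}$ meeting coordinate $p$; hence $s_{\set{p}}(\mC_{K}^{(a)}) = s_{\set{p}}(\mC_{K}^{(a+1)})$, so the code omitted at index $a$ by Lemma \ref{shortened_kernel_codes} is redundant in the chain. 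Thus at every step the removed index is exactly the one at which the shortened chain fails to gain dimension. Since the chain is determined solely by the fixed flag $\set{\mC_{K}^{(\phi)}}$ and the fixed set $\mP$, its dimension-jump locations — the entries of $A^{(t)}$ — and the codes sitting at them are independent of $\pi$, which is the claim.

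The step I expect to be the main obstacle is this final matching: rigorously identifying the indices selected by the iterative, re-indexing algorithm (which repeatedly deletes one row and one column and relabels survivors through $A^{(i)}$ and $J^{(i)}$) with the order-invariant dimension-jump pattern of the chain $\set{s_{\mP}(\mC_{K}^{(\phi)})}$. Reasoning at the level of the codes themselves, rather than tracking the shifting index arrays explicitly, is what lets one bypass the combinatorial bookkeeping and read off the order-independence directly.
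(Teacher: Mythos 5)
Your proposal is correct, and its skeleton coincides with the paper's: both iterate Lemma \ref{shortened_kernel_codes} to obtain the identity $\mC_{K^{(t)}_\pi}^{(i)} = s_{\mP}(\mC_{K}^{(A_i)})$ and then observe that the right-hand side is expressed through the order-invariant set $\mP$. The difference is in completeness. The paper's proof is a single sentence that stops at this identity, leaving implicit the point you correctly flag as the real content: the array $A = A^{(t)}$ is itself defined by running the algorithm in a particular order (the removed row at each step is the last nonzero row of a column of the \emph{current}, order-dependent kernel), so its $\pi$-invariance needs an argument. Your dimension-jump characterization supplies exactly this. The local fact is sound: after the equivalent additions of Alg. \ref{shortening_step}, column $p$ has its unique $1$ in row $a$, so among the generators of $\mC_K^{(a)}$ only $K[a]$ meets coordinate $p$, giving $s_{\set{p}}(\mC_K^{(a)}) = s_{\set{p}}(\mC_K^{(a+1)})$. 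Combined with the standard counting --- each inclusion $s_{\mP}(\mC_K^{(\phi+1)}) \subseteq s_{\mP}(\mC_K^{(\phi)})$ has codimension at most $1$ because the subcodes vanishing on $\mP$ differ by a quotient embedding into the one-dimensional $\mC_K^{(\phi)}/\mC_K^{(\phi+1)}$, while the total dimension drop along the chain is $l-t$ over $l$ steps --- the $l-t$ kernel codes of the invertible kernel $K^{(t)}_\pi$, having dimensions $l-t-i$, must be exactly the distinct members of the chain $\set{s_{\mP}(\mC_K^{(\phi)})}_{\phi=0}^{l}$, one per dimension, which depends only on $\mP$ and the flag of kernel codes of $K$. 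What each approach buys: the paper's version is compact and adequate if the reader accepts the bookkeeping of \eqref{fAArray} as transparently order-independent; yours is self-contained and makes the theorem robust, since it identifies the surviving indices intrinsically (as the dimension-jump positions) rather than through the shifting index arrays $A^{(i)}$, $J^{(i)}$. No gap remains in your argument; if anything, it proves slightly more than the paper states, namely an explicit order-free description of $A^{(t)}$.
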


%

One can shorten a  kernel $K$ by consecutive application of Alg. \ref{shortening_step}. Theorem \ref{thPermutation} implies that the kernel codes, and, therefore, the polarization properties of the resulting  kernel $K^{(t)}$ do not depend on the order, in which we shorten the columns of $K$. Thus, we can consider the vector $p$  just as a set $\mP = \set{p_0, p_1, \dots, p_{t-1}}$, which is referred to as a \textit{shortened pattern}. 

\begin{definition}
\label{defShort}
Let $K$ be an $l \times l$ polarization kernel. Let $\mP \subset [l], |\mP| = t,  1\leq|\mP| \leq l-2$. Thus, the kernel $s_{\mP}(K)$  is called a shortened kernel of  $K$ on coordinates  $\mP$ if any kernel code of $s_{\mP}(K)$ is some shortened kernel code of $K$ on coordinates $\mP$. More precisely, $\mC^{(i)}_{s_{\mP}(K)}$, $i \in [l-t]$, is given by $s_{\mP}(\mC_{K}^{(A_i)})$. The array $A = A^{(t)}$, defined in \eqref{fAArray}, denotes the mapping of $s_{\mP}(K)$ row indices $i \in [l-t]$ to row indices $i' \in [l]$ of $K$.
\end{definition}

In addition, by $\mR_\mP(K) = \set{r_0, r_1, \dots, r_{t-1}}$, where $r_i$ is defined by \eqref{fRIndex}, we denote a set of removed rows indices from $K$ after shortening on coordinates $\mP$. 

\subsection{Processing of shortened kernels}

Suppose we have an $l \times l$ polarization kernel $K$ together with some  processing algorithm. We assume that it is given as a black box, i.e. it takes $u_0^{\phi}$ and $y_0^{l-1}$ as input and outputs the probability 
$\W_1^{(\phi)}(u_0^\phi|y_0^{l-1})$ of the kernel $K$. In this section we describe how to modify inputs $u_0^{\phi}$ and $y_0^{l-1}$ to obtain the probabilities $\widetilde \W_1^{(\phi)}(u_0^\phi|y_0^{l-t-1})$ of  shortened kernel $s_{\mP}(K)$. 

Let $\widehat K$ be an $l \times l$ polarization kernel obtained from kernel $K$ by application of equivalent additions, i.e. $\widehat K = T K$, where $T$ is an $l \times l$ upper triangular matrix. Let
$
\widehat \W_1^{(\phi)}(u_0^\phi|y_0^{l-1})
$
be a probability $\phi$-th input symbol of kernel $\widehat K$. 
We have 
$
c_0^{l-1} = v_0^{l-1}K = u_0^{l-1} \widehat K,
$
which implies $v_0^{l-1} = u_0^{l-1} T^{-1}$. Since $T^{-1}$ is also upper triangular, it is can be verified that 
\begin{equation}
\label{WTrans}
\widehat \W_1^{(\phi)}(u_0^\phi|y_0^{l-1}) = \W_1^{(\phi)}(u_0^\phi \cdot (T^{-1})^{(\phi)}|y_0^{l-1}),
\end{equation}
where $(T^{-1})^{(\phi)}$ is an $(\phi + 1) \times (\phi + 1)$ submatrix of $T^{-1}$, which contains its first $\phi+1$ rows and columns. 

We consider shortening of $K$ on columns $\mP$. We start from computing such kernel $\widehat K = TK$ as $\widehat K[i,j] = 0$ for $j \in \mP$ and $i \in ([l] \setminus \mR_{\mP}(K))$. Next, for a given phase $\phi \in [l-t]$, input symbols $u_0^{\phi}$ of $s_{\mP}(K)$ and input LLRs $y_0^{l-t-1}$  we define an extended phase $\psi = A^{(t)}_\phi$, an extended input vector $\tilde u_0^{\psi}$ and an extended channel output vector $\tilde y_0^{l-1}$. More specifically, $\tilde u_{A^{(t)}_i} = u_i$ for $i \in [\phi]$, $\tilde u_j = 0$ for $j \in \mR_{\mP}(K)$,
and $\tilde y_{A^{(t)}i_i} = y_i$ for $i \in [\phi]$, $\tilde y_j = w_0, W(0|w_0) = 1$, for $j \in \mR_{\mP}(K)$. That is, the probabilities of the shortened kernel are given by
\begin{equation}
\widetilde \W_1^{(\phi)}(u_0^\phi|y_0^{l-t-1}) = 
\W_1^{(\psi)}(\tilde u_0^\psi \cdot (T^{-1})^{(\psi)}|\tilde y_0^{l-1}).
\end{equation}

%

\section{Optimizing the shortening pattern}
Error exponent of the shortened kernel can substantially vary  for different shortening patterns. Our goal is to find such a shortening pattern that results in a shortened kernel with as large error exponent as possible. 

\subsection{Bounds on partial distances after shortening}


\begin{lemma}[\cite{korada2010polar}]
\label{lemmaPD}
Let $K$ be an $l \times l$ polarization kernel with partial distances profile (PDP) $\mD$. Let $\mD'$ be a PDP  of a shortened kernel $s_{\set{j}}(K)$, and $\mR_{\set{j}}(K) = \set{a}$. We have
\begin{align}
&\mD'_k \geq \mD_k, &0 \leq k \leq a -1, \\
&\mD'_k = \mD_{k+1}, &a \leq k \leq l-2.
\end{align}
\end{lemma}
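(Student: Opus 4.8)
The plan is to exploit Proposition~\ref{pAddRow}: since the equivalent additions performed in steps~\ref{xorLoopStart}--\ref{xorLoopEnd} of Alg.~\ref{shortening_step} do not change any kernel code, they do not change the PDP either, because each partial distance $\mD_i = d_H(K[i],\mC_K^{(i+1)})$ is determined entirely by the kernel codes. Hence I may replace $K$ by the matrix $\widehat K = TK$ produced by those steps, which has the same PDP $\mD$ as $K$ and whose column $j$ is zero in every row except row $a$. The shortened kernel $K' = s_{\{j\}}(K)$ is then obtained from $\widehat K$ by deleting row $a$ and column $j$, so that $K'[k]=\widehat K[k]$ for $k<a$ and $K'[k]=\widehat K[k+1]$ for $a\le k\le l-2$, each with its $j$th entry removed. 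The single observation I will use throughout is that, because every row of $\widehat K$ other than row $a$ vanishes in column $j$, deleting column $j$ from a vector that is already zero there preserves both its Hamming weight and all Hamming distances among such vectors; this is the concrete form of the kernel-code identities of Lemma~\ref{shortened_kernel_codes}.

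I would first settle the range $a\le k\le l-2$. Here $K'[k]$ together with all generators $K'[k+1],\dots,K'[l-2]$ of $\mC_{K'}^{(k+1)}$ are the column-$j$ deletions of the rows $\widehat K[k+1],\widehat K[k+2],\dots,\widehat K[l-1]$, all of which have index exceeding $a$ and therefore vanish in column $j$. Since the coordinate deletion is an isometry on these vectors, the definition~\eqref{fPDDef} applied to $K'$ gives $\mD'_k = d_H(\widehat K[k+1],\mC_{\widehat K}^{(k+2)}) = \mD_{k+1}$, the last equality using $\widehat{\mD}=\mD$. The endpoint $k=l-2$ collapses to $\mD'_{l-2}=\wt(\widehat K[l-1])=\mD_{l-1}$, consistent with the convention for the last partial distance.

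For the range $0\le k\le a-1$ the row $K'[k]$ equals $\widehat K[k]$ with column $j$ removed, and $\widehat K[k]$ again vanishes in column $j$. The generators of $\mC_{K'}^{(k+1)}$ are the deletions of $\widehat K[m]$ for $m\in\{k+1,\dots,l-1\}\setminus\{a\}$; since $\widehat K[a]$ is the unique generator of $\mC_{\widehat K}^{(k+1)}$ with a nonzero $j$th coordinate, $\mC_{K'}^{(k+1)}$ is the isometric image of the subcode $\{w\in\mC_{\widehat K}^{(k+1)}:w_j=0\}\subseteq\mC_{\widehat K}^{(k+1)}$. Minimising the distance from the fixed vector $\widehat K[k]$ over a subcode can only increase it, so $\mD'_k = d_H(\widehat K[k],\{w\in\mC_{\widehat K}^{(k+1)}:w_j=0\}) \ge d_H(\widehat K[k],\mC_{\widehat K}^{(k+1)}) = \mD_k$, which is the claimed inequality.

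The delicate point is the dichotomy at row $a$: for $k<a$ the shortened kernel code $\mC_{K'}^{(k+1)}$ loses exactly the generator $\widehat K[a]$, the single row carrying column $j$, so it is a proper subcode and the bound is genuinely an inequality; whereas for $k\ge a$ that generator is already absent and shortening degenerates into a distance-preserving coordinate deletion, forcing the equality with $\mD_{k+1}$. Keeping the row relabeling ($K'[k]=\widehat K[k]$ below $a$ and $K'[k]=\widehat K[k+1]$ from $a$ onward) aligned with this split is the part that requires care; once it is fixed, both assertions follow from the isometry observation and the elementary ``minimum over a subset'' inequality.
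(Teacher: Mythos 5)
Your proof is correct, and it is essentially the argument behind the cited result: the paper itself gives no proof of Lemma~\ref{lemmaPD} (it is imported from \cite{korada2010polar}), but your route — normalize to $\widehat K = TK$ via Proposition~\ref{pAddRow} (which preserves the PDP, since $\widehat K[k]-K[k]\in\mC_K^{(k+1)}$ and distance to a linear code is invariant under such translates), note that deleting the dead coordinate $j$ is an isometry, identify $\mC_{K'}^{(k+1)}$ for $k<a$ with the subcode $\set{w\in\mC_{\widehat K}^{(k+1)} : w_j=0}$ so that $\mD'_k\geq\mD_k$ follows from minimizing over a subset, and observe that for $k\geq a$ shortening degenerates to pure coordinate deletion so that $\mD'_k=\mD_{k+1}$ — is exactly the mechanism the paper packages into Lemma~\ref{shortened_kernel_codes}. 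The index bookkeeping across the split at row $a$ and the endpoint $k=l-2$ (including the degenerate case $a=l-1$, where the second range is empty and your subcode argument still applies with the zero code) are all handled correctly.
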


It turns out that 
Lemma \ref{lemmaPD} can be refined in some cases.
\begin{lemma}
\label{lemmaWT}
For any $l \times l$ polarization kernel $K$ with PDP $\mD$ we have
$\mD_i \leq \wt(K[i]), i \in [l]$.
\end{lemma}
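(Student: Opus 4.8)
The plan is to read the inequality off directly from the definition of partial distance, using the single structural fact that each kernel code is linear and therefore contains the all-zero codeword $\mathbf 0$.

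First I would treat the interior indices $i \in [l-1]$, i.e. $0 \le i \le l-2$. Here $\mD_i = d_H(K[i], \mC_K^{(i+1)})$ with $\mC_K^{(i+1)} = \langle K[i+1],\dots,K[l-1]\rangle$, and by the definition of the distance from a vector to a code this equals $\min_{c \in \mC_K^{(i+1)}} d_H(K[i], c)$. Since a minimum over the whole code is bounded above by the distance to any single chosen codeword, I would pick $c = \mathbf 0 \in \mC_K^{(i+1)}$, which lies in the code because $\mC_K^{(i+1)}$ is linear, to obtain
\[
\mD_i \;\le\; d_H(K[i], \mathbf 0) \;=\; \wt(K[i]).
\]

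It then remains to handle the boundary index $i = l-1$, which by \eqref{fPDDef} is defined separately as $\mD_{l-1} = d_H(K[l-1], \mathbf 0) = \wt(K[l-1])$; here the claimed inequality holds trivially, with equality.

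There is essentially no obstacle in this lemma: its entire content is the observation that the nearest-codeword distance cannot exceed the distance to $\mathbf 0$, so in particular it is bounded by $\wt(K[i])$. The only point requiring a moment's care is that the last row is governed by the separate case of the definition of $\mD_{l-1}$ rather than by the degenerate code $\mC_K^{(l)} = \set{\mathbf 0}$, although the same argument in fact applies there verbatim.
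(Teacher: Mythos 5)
Your proof is correct and matches the paper's approach: the paper simply states that the lemma ``follows directly from definition \eqref{fPDDef},'' and your argument---bounding the minimum over the linear code $\mC_K^{(i+1)}$ by the distance to its codeword $\mathbf 0$, plus the trivial equality at $i=l-1$---is exactly the detail that one-line proof leaves implicit.
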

\begin{proof}
If follows directly from definition \eqref{fPDDef}.
\end{proof}

By $\mX_{\mP}(K)$ we denote a set of $s_{\mP}(K)$ row indices, which were modified by equivalent additions during the shortening. 
In particular, let $a$ be an index of the last nonzero row in column $j$, thus, 
$\mX_{\set{j}}(K) = \set{b|K[b,j] =1, b \in [a]}$.
\begin{lemma}
\label{lemmaPDW}
Consider an $l \times l$ polarization kernel $K$ with PDP $\mD$, such that $\wt(K[i]) = \mD_i, i \in [l]$. Let $a$ be an index of last nonzero row in column $j$ of $K$. Thus, for PDP $\mD'$ of a kernel $K ' =s_{\set{j}}(K)$ we have  
\begin{align}
&\label{L9} \mD'_k = \mD_{k+1}, &(a \leq k \leq l-2), \\ 
&\label{L10} \mD_k \leq \mD'_k \leq \wt(K'[k]), &k \in \mX_{\set{j}}(K)\\
&\label{L11} \mD'_k = \mD_k, &k \in ([a] \setminus \mX_{\set{j}}(K))  \
\end{align}
\end{lemma}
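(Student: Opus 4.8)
The plan is to treat the three assertions \eqref{L9}, \eqref{L10}, \eqref{L11} separately, since each concerns rows at a different position relative to the removed index $a$ and needs a different ingredient. The first two are essentially bookkeeping on top of the preceding lemmas, and all the content lies in \eqref{L11}.

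For \eqref{L9} (the rows $a \le k \le l-2$, below the deleted one) I would simply invoke Lemma \ref{lemmaPD}: its second assertion already states $\mD'_k = \mD_{k+1}$ in exactly this range, with no extra hypothesis, so nothing remains to do. For \eqref{L10} (the modified rows $k \in \mX_{\set{j}}(K) \subseteq [a]$) I would combine the two earlier lemmas: the lower bound $\mD_k \le \mD'_k$ is the first assertion of Lemma \ref{lemmaPD} (valid for $0 \le k \le a-1$), and the upper bound $\mD'_k \le \wt(K'[k])$ is Lemma \ref{lemmaWT} applied to the shortened kernel $K'$ itself. Here the assumption $\wt(K[i]) = \mD_i$ plays no role.

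The substance is \eqref{L11}, the \emph{unmodified} rows $k \in [a] \setminus \mX_{\set{j}}(K)$, where I would prove equality by a sandwich. By the definition of $\mX_{\set{j}}(K)$, such a $k$ has $K[k,j] = 0$, so the XOR loop (lines \ref{xorLoopStart}--\ref{xorLoopEnd} of Alg. \ref{shortening_step}) leaves row $k$ unchanged; hence $K'[k]$ is obtained from $K[k]$ by deleting coordinate $j$, which is a zero, and therefore $\wt(K'[k]) = \wt(K[k])$. Applying Lemma \ref{lemmaWT} to $K'$ gives $\mD'_k \le \wt(K'[k]) = \wt(K[k])$, and the hypothesis $\wt(K[k]) = \mD_k$ converts this into $\mD'_k \le \mD_k$; pairing it with $\mD'_k \ge \mD_k$ from Lemma \ref{lemmaPD} forces $\mD'_k = \mD_k$.

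The delicate point, and the step I would write out most carefully, is the identity $\wt(K'[k]) = \wt(K[k])$ for $k \notin \mX_{\set{j}}(K)$: one must check simultaneously that these rows are precisely the ones carrying a $0$ in column $j$, that they survive the equivalent additions untouched, and that the coordinate being removed is therefore a zero so that no weight is lost. Once this is secured, the role of the assumption is transparent: $\wt(K[k]) = \mD_k$ says the zero codeword already attains the $k$-th partial distance, so shortening—which only shrinks the competing kernel code $\mC_K^{(k+1)}$—cannot lower $\mD_k$, while the preserved row weight keeps it from rising, pinning $\mD'_k$ exactly at $\mD_k$.
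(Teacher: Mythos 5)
Your proof is correct and follows essentially the same route as the paper: \eqref{L9} and \eqref{L10} from Lemmas \ref{lemmaPD} and \ref{lemmaWT}, and \eqref{L11} by sandwiching $\mD'_k$ between the lower bound of Lemma \ref{lemmaPD} and the upper bound $\mD'_k \leq \wt(K'[k]) = \wt(K[k]) = \mD_k$. Your only addition is to spell out why $\wt(K'[k]) = \wt(K[k])$ for the untouched rows (they carry a $0$ in column $j$, so the equivalent additions and the column deletion leave their weight intact), a step the paper asserts without elaboration.
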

\begin{proof}
Terms \eqref{L9}--\eqref{L10} follows from Lemmas \ref{lemmaPD} and \ref{lemmaWT}. By Lemma \ref{lemmaPD} we have $\mD'_k \geq \mD_k$, $k \in [a]$. For 
$k \in ([a] \setminus \mX_{\set{j}}(K))$ we have $\wt(K'[k]) = \wt(K[k]) = \mD_k$, thus, by Lemma \ref{lemmaWT} we have $\mD_k \geq \mD'_k$, which results in \eqref{L11}.
\end{proof}
Application of Lemma \ref{lemmaPDW} to the case of shortening on $t$ columns results in the following

\begin{corollary}
\label{CorD}
Consider an $l \times l$ polarization kernel $K$ with PDP $\mD$, such that $\wt(K[i]) = \mD_i, i \in [l]$. Thus, for partial distances $\mD'_i, i \in [l-t]$ of  $K' =s_{\mP}(K)$, $|\mP| = t$, we have 
\begin{align}
&\mD'_i = \mD_{A_i}, &A_i \not \in \mX_{\mP}(K),\\
\label{eqIndToCompute}\mD_{A_i} \leq &\mD'_i  \leq \wt(s_{\mP}(K)[i]), &A_i  \in \mX_{\mP}(K),
\end{align}
where $A$ is given in definition \ref{defShort}.
\end{corollary}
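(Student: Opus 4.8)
The plan is to prove the corollary by reducing its two asserted bounds to the single-column Lemmas~\ref{lemmaPD} and~\ref{lemmaWT} and invoking the order-independence of Theorem~\ref{thPermutation}, rather than by naively iterating Lemma~\ref{lemmaPDW}. The reason is that the hypothesis $\wt(K[i])=\mD_i$ need not survive a single shortening step: after removing one column, exactly the rows in $\mX_{\set{j}}(K)$ are altered by equivalent additions and may acquire weight strictly larger than their partial distance, so Lemma~\ref{lemmaPDW} cannot be applied verbatim to the intermediate kernels. This loss of the weight hypothesis on the modified rows is the main obstacle, and the corollary is stated precisely so that the exact value $\mD'_i=\mD_{A_i}$ is only claimed for the rows that are never touched.

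First I would establish the lower bound $\mD'_i\ge \mD_{A_i}$ for every $i\in[l-t]$, with no use of the weight hypothesis. By Theorem~\ref{thPermutation} the partial distances of $s_{\mP}(K)$ do not depend on the order in which the $t$ columns are removed, so I may view $s_{\mP}(K)$ as the result of $t$ consecutive single-column shortenings. At each such step Lemma~\ref{lemmaPD} shows that every surviving partial distance either is preserved exactly (the entries with $k\ge a$, where $\mD'_k=\mD_{k+1}$) or can only increase (the entries with $k<a$, where $\mD'_k\ge\mD_k$); deleting index $a$ at each step traces out exactly the index map $A$ of Definition~\ref{defShort}. Composing the $t$ steps therefore yields $\mD'_i\ge \mD_{A_i}$.

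Next I would obtain the upper bound $\mD'_i\le \wt(s_{\mP}(K)[i])$, which is immediate from Lemma~\ref{lemmaWT} applied to the final kernel $s_{\mP}(K)$ itself and again needs no hypothesis. Together with the lower bound this already proves \eqref{eqIndToCompute} for the modified rows $A_i\in\mX_{\mP}(K)$.

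It remains to collapse the two bounds for the untouched rows $A_i\notin\mX_{\mP}(K)$. The key observation is that a surviving row that is never modified must carry a zero in every shortened column: the pivot chosen in Algorithm~\ref{shortening_step} is the last nonzero entry of its column, so any surviving row with a $1$ in a shortened column lies strictly above the pivot and is therefore XORed, i.e.\ belongs to $\mX_{\mP}(K)$. Hence for $A_i\notin\mX_{\mP}(K)$ deleting the columns of $\mP$ removes no nonzero entries of that row, giving $\wt(s_{\mP}(K)[i])=\wt(K[A_i])$, and the weight hypothesis yields $\wt(K[A_i])=\mD_{A_i}$. Combining with the two bounds gives $\mD_{A_i}\le \mD'_i\le \mD_{A_i}$, so $\mD'_i=\mD_{A_i}$, as claimed. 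The only delicate point is this zero-pattern argument, which I would make rigorous using the order-independence of Theorem~\ref{thPermutation} to align the notion of ``never modified'' across all intermediate kernels.
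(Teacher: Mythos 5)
Your proposal is correct, but it takes a genuinely different route from the paper's. The paper proves Corollary~\ref{CorD} in a single line --- ``application of Lemma~\ref{lemmaPDW} to the case of shortening on $t$ columns'' --- i.e.\ by iterating the single-column lemma, and it never addresses the obstacle you identify: the hypothesis $\wt(K[i])=\mD_i$ of Lemma~\ref{lemmaPDW} need not survive one shortening step, because the rows in $\mX_{\set{j}}(K)$ are altered by equivalent additions and may end up with weight strictly exceeding their partial distance, so the naive iteration is incomplete as stated. Your decomposition repairs exactly this: the lower bound $\mD'_i\ge\mD_{A_i}$ comes from composing only the hypothesis-free Lemma~\ref{lemmaPD} over the $t$ steps (Theorem~\ref{thPermutation} lets you fix any order; in fact any fixed order already suffices under Definition~\ref{defShort}); the upper bound in \eqref{eqIndToCompute} is one application of Lemma~\ref{lemmaWT} to the final kernel; and the collapse to equality for $A_i\notin\mX_{\mP}(K)$ follows from your zero-pattern observation --- a surviving row that is never modified must have a $0$ in each column of $\mP$ at the step where that column is shortened (otherwise it lies strictly above the pivot, which Algorithm~\ref{shortening_step} takes as the last nonzero entry, and gets XORed), and since it is never modified its entries coincide with those of $K$ throughout, giving $\wt(s_{\mP}(K)[i])=\wt(K[A_i])=\mD_{A_i}$. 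What each approach buys: the paper's iteration is shorter and mirrors the single-column statement \eqref{L9}--\eqref{L11}, but to be rigorous it needs an invariant along the lines of ``rows untouched so far still satisfy weight equals partial distance''; your argument makes precisely that invariant explicit (and could equivalently be phrased as a step-by-step induction maintaining it), and as a bonus shows the corollary really only uses the hypothesis $\wt(K[A_i])=\mD_{A_i}$ for the untouched rows, a mild strengthening of the stated result.
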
 

\subsection{Algorithm of searching for optimal shortening pattern}

In this section we propose an algorithm to find a shortening pattern resulting in a kernel with the largest error exponent.

Suppose we want to shorten $l \times l$ kernel $K$ on $t$ coordinates. Thus, using Theorem \ref{thPermutation}, we can formulate our task as finding

\begin{equation}
\label{short_max}
\mP^{\star} = \arg  \max_{ \mP \in \mB(l,t)} E(s_{\mP}(K)),
\end{equation}
where $\mB(l,t) = \set {B|B \subseteq [l], |B| = t}, |\mB(l,t)| = \binom{n}{k}.$

\begin{algorithm}[ht]
\caption{\texttt{FindOptimalShortening}$(K, l, t)$}
\label{find_optimal_shortening}
Compute PDP $\mD$ of $K$;\\
Transform $K$ into 
$K'$ with $\wt(K'[i]) = \mD_i$ by equivalent additions;\\
$E^{\star} = 0; \mP^{\star} = \emptyset; l' \gets l-t$;\\
\For{\textbf{each} $\mP$ \textbf{in} $\mB(l,t)$}{ \label{algForeachLB}
Compute $s_{\mP}(\widetilde K)$;\\
\tcc{lower bound on error exponent}
$\widehat E \gets \frac{1}{l'} \sum_{i \in ([l] \setminus {\mR_{\mP}(\widetilde K)})}\log_{l'}\mD_i$;\\
\If{$\widehat E > E^{\star}$}{ 
$E^{\star} \gets \widehat E; \mP^{\star} \gets \mP$;\label{algForeachLBEnd}\\
}
  
}
\For{\textbf{each} $\mP$ \textbf{in} $\mB(l,t)$}{ \label{algForeach}
\tcc{Upper bound from Corollary \ref{CorD}}
Compute $\bar \mD_0^{l'-1}$, where 
$
\bar \mD_i = 
\begin{cases}
\wt(s_{\mP}(K')[i]), &A_i \in \mX_{\mP}(K'),\\
\mD_{A_i}, &\text{otherwise}; 
\end{cases}
$\\
$\bar E \gets \frac{1}{l'}\sum_{i = 0}^{l'-1} \log_{l'}\bar \mD_i$;\\
\If{$\bar E < E^{\star}$}{
\textbf{continue};\\
}
Compute PDP\ $\mD'$ of $s_{\mP}(K')$;\\
$E \gets \frac{1}{l'}\sum_{i = 0}^{l'-1} \log_{l'} \mD'_i$;\label{algComputePD}\\
\If{$E >\ E^{\star}$}{
$E^{\star} = E, \mP^{\star} = \mP$;\\
}
}
\Return $\mP^{\star}$;\\
\end{algorithm}

To compute \eqref{short_max} one should run \textit{FindOptimalShortening}$(K, l, t)$ described in Alg. \ref{find_optimal_shortening}.
Note that the complexity of $s_{\mP}(K)$ computation is $O(n\cdot|\mP|)$ row additions. Thus, we propose to initially compute lower bound on the optimal error exponent, which is done at lines \ref{algForeachLB}-\ref{algForeachLBEnd} of Alg. \ref{find_optimal_shortening}. This allows us to skip costly calculation of $s_{\mP}(K)$ PDP if the upper bound $\bar E$ is less than current $E^{\star}$. Moreover, at line \ref{algComputePD} one could compute partial distances $\mD'_i$ only for indices $i$ given in \eqref{eqIndToCompute}.

Observe that the proposed Alg. \ref{find_optimal_shortening} is feasible for $l \leq 32$. For larger $l$ one can pick a random subset of $\mB(l,t)$ at line \ref{algForeachLB}, but in this case Alg. \ref{find_optimal_shortening} is not guaranteed to find an optimal shortening pattern \eqref{short_max}. Nevertheless, one can still estimate lower bound on $E^{\star}$ at lines \ref{algForeachLB}-\ref{algForeachLBEnd} of Alg. \ref{find_optimal_shortening} even for large $l$.


\subsection{Shortening of some known kernels}

In this work we consider MK polar codes of length $l \cdot 32$ for $17 \leq l \leq 31$. Recently, $K_{32}$  kernel with $E(K_{32}) = 0.522$ and $\mu(K_{32}) = 3.417$, together with efficient window processing algorithm was introduced in \cite{trofimiuk2021window}.  So we basically consider $K_{32} \otimes F_{5}$ polarizing transform, where
$
F_t = \left(
\arraycolsep=1.15pt\def\arraystretch{0.5}
\begin{array}{cc}
1&0\\
1&1
\end{array} \right)^{\otimes t}
$ is an $2^t \times 2^t$ Arikan matrix. Therefore, to obtain codes of length $l \cdot 32, 17 \leq l \leq 31,$ we shorten either $K_{32}$ or $F_{5}$, which results in polarizing transforms $s_{\mP}(K_{32}) \otimes F_{5}$ or $K_{32} \otimes s_{\mP}(F_5)$. In addition, we shortened kernels $F_{4}$, $K_{16}$, which was also proposed for window processing algorithm \cite{trofimiuk2021window}, and  $32 \times   32$ BCH kernel $B_{32}$ proposed in  \cite{moskovskaya2020design}.  

\begin{table}[htbp]
\caption{Properties of the shortened polarization kernels}
\label{tShort32}
\centering
\scalebox{0.9}{
\footnotesize
\setlength{\tabcolsep}{1.25pt}
\begin{tabular}{
|c||c|c|c||c|c|c||c|c||c|
}
\hline
\multirow{2}{*}{$l$} & \multicolumn{3}{c||}{$F_5$}                  & \multicolumn{3}{c||}{$K_{32}$}               & \multicolumn{2}{c||}{$B_{32}$} & \multirow{2}{*}{$E$,   \cite{korada2010polar}} \\ \cline{2-9}
                     & $E^{\star}$ & $\mu^{\star}$ & $\mP^{\star}$ & $E^{\star}$ & $\mu^{\star}$ & $\mP^{\star}$ & $E^{\star}$                           & $\mu^{\star}$                          &                                                \\ \hline
17                   & 0.475       & 3.907         & FF00FE00      & 0.492       & 3.693         & FFFE0000      & 0.492                                 & 3.616                                  & 0.492                                          \\ \hline
18                   & 0.466       & 3.993         & F0E0F0E0      & 0.490       & 3.652         & FFF0000C      & 0.490                                 & 3.568                                  & 0.490                                          \\ \hline
19                   & 0.458       & 4.119         & F0E0E0E0      & 0.480       & 3.815         & FFF80000      & \textbf{0.494}                                 & 3.499                                  & 0.487                                          \\ \hline
20                   & 0.463       & 4.036         & C8C8C8C8      & 0.490       & 3.684         & FFF00000      & 0.497                                 & 3.450                                  & 0.497                                          \\ \hline
21                   & 0.455       & 4.168         & F0C0E0C0      & 0.481       & 3.847         & FEF00000      & \textbf{0.493}                                 & 3.467                                  & 0.487                                          \\ \hline
22                   & 0.459       & 4.138         & F0C0C0C0      & 0.488       & 3.789         & A888A888      & 0.494                                 & 3.413                                  & 0.494                                          \\ \hline
23                   & 0.461       & 4.136         & C8888888      & 0.489       & 3.779         & C8888888      & 0.501                                 & 3.415                                  & 0.501                                          \\ \hline
24                   & 0.473       & 3.955         & 88888888      & 0.499       & 3.624         & 88888888      & \textbf{0.51}                                & 3.308                                  & 0.504                                          \\ \hline
25                   & 0.465       & 4.075         & F000E000      & 0.490       & 3.761         & F000E000      & \textbf{0.505}                                 & 3.326                                  & 0.500                                          \\ \hline
26                   & 0.466       & 4.071         & C080C080      & 0.495       & 3.731         & F000A000      & \textbf{0.509}                                 & 3.301                                  & 0.505                                          \\ \hline
27                   & 0.467       & 4.091         & C0808080      & 0.495       & 3.747         & F8000000      & \textbf{0.51}                                 & 3.445                                  & 0.508                                          \\ \hline
28                   & 0.475       & 3.973         & F0000000      & 0.502       & 3.651         & F0000000      & 0.515                                 & 3.279                                  & 0.515                                          \\ \hline
29                   & 0.476       & 4.001         & C0008000      & 0.500       & 3.704         & E0000000      & 0.517                                 & 3.289                                  & 0.517                                          \\ \hline
30                   & 0.482       & 3.912         & C0000000      & 0.506       & 3.647         & C0000000      & 0.522                                 & 3.272                                  & 0.522                                          \\ \hline
31                   & 0.488       & 3.838         & 80000000      & 0.511       & 3.596         & 80000000      & 0.526                                 & 3.264                                  & 0.526                                          \\ \hline
32                   & 0.5         & 3.627         &               & 0.522     & 3.417         &               & 0.537                                 & 3.122                                  & 0.537                                          \\ \hline
                     & \multicolumn{3}{c||}{$F_4$}                  & \multicolumn{3}{c||}{$K_{16}$}               & \multicolumn{3}{l|}{\multirow{9}{*}{}}                                                                                          \\ \cline{1-7}
9                    & 0.456       & 4.129         & F0E0          & 0.462       & 3.960         & F281          & \multicolumn{3}{l|}{}                                                                                                           \\ \cline{1-7}
10                   & 0.452       & 4.185         & C8C8          & 0.462       & 3.876         & FC00          & \multicolumn{3}{l|}{}                                                                                                           \\ \cline{1-7}
11                   & 0.447       & 4.333         & C888          & 0.477       & 3.885         & F800          & \multicolumn{3}{l|}{}                                                                                                           \\ \cline{1-7}
12                   & 0.465       & 4.063         & 8888          & 0.492       & 3.676         & F000          & \multicolumn{3}{l|}{}                                                                                                           \\ \cline{1-7}
13                   & 0.457       & 4.227         & C080          & 0.482       & 3.882         & E000          & \multicolumn{3}{l|}{}                                                                                                           \\ \cline{1-7}
14                   & 0.469       & 4.088         & C000          & 0.491       & 3.810         & C000          & \multicolumn{3}{l|}{}                                                                                                           \\ \cline{1-7}
15                   & 0.478       & 4.009         & 8000          & 0.498       & 3.773         & 8000          & \multicolumn{3}{l|}{}                                                                                                           \\ \cline{1-7}
16                   & 0.5         & 3.627         &               & 0.5         & 3.450         &               & \multicolumn{3}{l|}{}                                                                                                           \\ \hline
\end{tabular}
}
\end{table}

Table \ref{tShort32} presents the optimal shortening patterns and polarization properties for above mentioned kernels. The shortening patterns are provided in big-endian hexadecimal representation of the number $\sum_{p \in \mP} 2^p$. For comparison, we included the error exponents of BCH kernels, shortened by the greedy algorithm \cite{korada2010polar}. It can be observed that in many cases the proposed Alg. \ref{find_optimal_shortening} results in shortened BCH kernels with higher error exponent compared with the greedy algorithm.
 
It turns out that different shortening patterns may result in polarization kernels with the same error exponent but different scaling exponents. Thus, for each kernel and shortening size we tried to find a shortening pattern resulting in the minimal scaling exponent. Note that the values $\mu^{\star}$ from Table \ref{tShort32} are most likely not optimal. Unfortunately, explicit minimization of scaling exponent after shortening is an open problem.

\section{Numeric results}

The performance of all codes was investigated for the case of AWGN\ channel with BPSK\ modulation. The sets of frozen symbols were obtained by method proposed in \cite{trifonov2019construction}. 

\begin{figure}[ht]
\includegraphics[width=\linewidth]{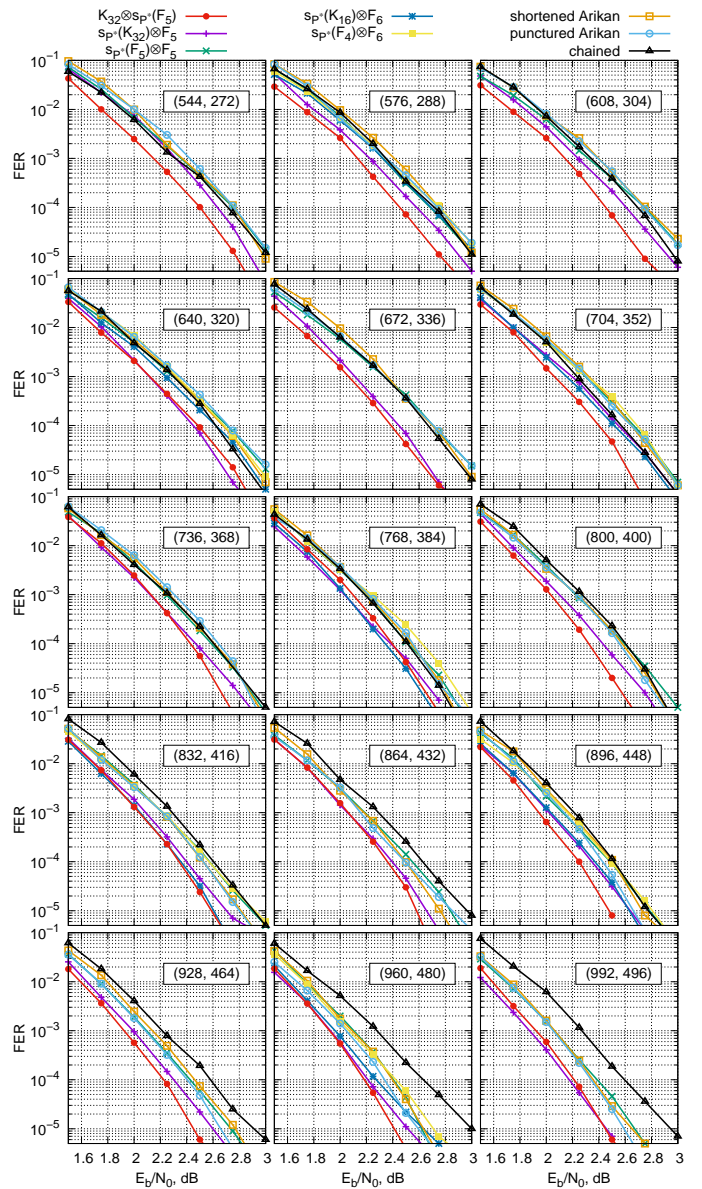}
\caption{Performance of mixed-kernel polar subcodes with shortened kernels}
\label{figShortenedKernels}
\end{figure}

Fig. \ref{figShortenedKernels} presents the simulation results for MK polar subcodes (PSCs) \cite{trifonov2017randomized}, chained \cite{trifonov2018randomized}, shortened and punctured PSCs of different lengths with rate $1/2$. All codes were decoded by successive cancellation list (SCL) decoder \cite{tal2015list} with the list size 8. It can be observed that MK PSCs with $K_{32} \otimes s_{\mP^{\star}}(F_5)$  provide approximately 0.2 dB gain compared with all polar code constructions based on Arikan kernel. 

It is noticeable that MK PSCs of lengths $12 \cdot 64 = 24 \cdot 32 = 768$ and 
$13 \cdot 64 = 26 \cdot 32 = 832$ with  
$s_{\mP^{\star}}(K_{16}) \otimes F_6$ has the same performance as the codes with $K_{32} \otimes s_{\mP^{\star}}(F_5)$. This fact shows that kernels of size $l < 16$ and even $E < 0.5$ are useful for MK polar code construction. 

It can be seen that for some lengths MK PSCs with  
 $K_{32} \otimes s_{\mP^{\star}}(F_5)$ significantly outperforms codes with 
$s_{\mP^{\star}}(K_{32}) \otimes F_5$.  To the best of our knowledge, there are no analytical tools for analysis of MK polar codes with arbitrary kernels. The problem of MK polar code construction, which includes the selection of kernels included in the transform and its permutation, is an open problem and out of scope of this paper

Observe that one can obtain polar codes of considered lengths by shortening of the entire polar code with polarizing transform $K_{32} \otimes F_{5}$. Unfortunately, there are no efficient methods for shortening of polar codes with arbitrary large kernels. Moreover, there are no methods for efficient estimation of bit subchannels reliability for such codes. On the contrary, method \cite{trifonov2019construction} allows one to estimate the bit subchannel capacities of MK polar codes with transform $\mK_1^{\otimes t_1} \otimes \mK_2^{\otimes t_2}\otimes ... \otimes \mK_m^{\otimes t_m}$, $t_i \geq 1$,  if subchannel capacity function is provided for each kernel $\mK_i, i \in m$. To compare the performance of these two approaches, we included the simulation results for transforms $s_{\mP}(F_5) \otimes F_5$  and $s_{\mP}(F_4) \otimes F_6$. It can be observed from Fig. \ref{figShortenedKernels}, that the performance of the shortened Arikan polar subcodes and polar subcodes with shortened Arikan kernels are similar.

\begin{figure}[htbp]
\centering
\resizebox{!}{0.311\textwidth}{\input{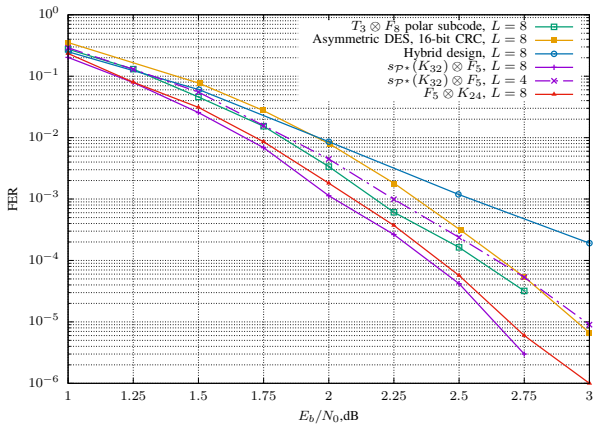}}
\caption{Performance of $(768, 384)$ polar codes}
\label{f768}
\end{figure}

Fig. \ref{f768} illustrates the  performance of different $(768, 384)$ polar codes decoded by SCL decoder with list size $L$. We report the results for $s_{\mP^{\star}}(K_{32}) \otimes F_5$,  $F_5 \otimes K_{24}$ and $T_3 \otimes F_8$ MK PSCs, where $K_{24}$ is a $24 \times 24$ kernel and $T_3$ is a $3\times 3$ kernel introduced in \cite{trofimiuk2021searchArxiv} and \cite{bioglio2020multikernel} respectively . We also include the results for descending (DES) asymmetric construction with 16-bit CRC \cite{cavatassi2019asymmetric} and improved hybrid design \cite{bioglio2019improved} of MK polar codes. It can be seen that MK polar code with the proposed shortened kernel $s_{\mP^{\star}}(K_{32})$ outperforms all other code constructions decoded with the same list size ($L= 8$) and demonstrates the same performance under SCL with $L = 4$. 

The $K_{24}$ kernel was obtained by recently introduced search algorithm \cite{trofimiuk2021search}, which can also be used to obtain polarization kernels which admit low processing complexity. However, for each desired size of the kernel, this method requires complicated search over different partial distances. Moreover such kernels does not have a common structure, whereas the proposed shortening method can be used to obtain a kernel of an arbitrary size (less than base kernel) and shortened polarization kernels can be processed by the same algorithm.

\section{Conclusions}

In this paper a shortening method for large polarization kernels was proposed. This algorithm uses lower and upper bounds on partial distances after shortening to quickly eliminate unsuitable shortening patterns. The algorithm results in a shortened polarization kernel with the highest error exponent if applied to kernels of size up to 32.

We demonstrated  application of the proposed algorithm to some $16 \times 16$ and $32 \times 32$ kernels and obtained kernels of size from 9 to 31. Polar codes based on the obtained kernels were shown to outperform shortened and punctured polar codes with Arikan kernel, and polar codes with small kernels. 

\section*{Acknowledgement} 
This work is partially supported by the Ministry of Science and Higher Education of Russian Federation, passport of goszadanie no. 2019-0898.

\bibliographystyle{ieeetran}

\end{document}